\def\BibTeX{{\rm B\kern-.05em{\sc i\kern-.025em b}\kern-.08em
    T\kern-.1667em\lower.7ex\hbox{E}\kern-.125emX}}
\DeclareMathOperator{\EX}{\mathbb{E}}
\DeclareSymbolFont{bbold}{U}{bbold}{m}{n}
\DeclareSymbolFontAlphabet{\mathbbold}{bbold}
\newtheorem{theorem}{\bf Theorem}
\newtheorem{proposition}{\bf Proposition}
\newtheorem{lemma}{\bf Lemma}
\def\maketag@@@#1{\hbox{\m@th\normalfont\normalsize#1}}
\begin{document}

\title{\huge Finding the Sweet Spot for Data Anonymization:\\ A Mechanism Design Perspective}

\author{Abdelrahman Eldosouky, Tapadhir Das, Anuraag Kotra, and Shamik Sengupta
 \vspace{-0.5cm}
 \thanks{This paper is an extension of the work originally presented in \cite{korta2020Every}}
 }


\IEEEtitleabstractindextext{%
\begin{abstract}
Data sharing between different organizations is an essential process in today’s connected world. However, recently there were many concerns about data sharing as sharing sensitive information can jeopardize users’ privacy. To preserve the privacy, organizations use anonymization techniques to conceal users' sensitive data. However, these techniques are vulnerable to de-anonymization attacks which aim to identify individual records within a dataset. In this paper, a two-tier mathematical framework is proposed for analyzing and mitigating the de-anonymization attacks, by studying the interactions between sharing organizations, data collector, and a prospective attacker. In the first level, a game-theoretic model is proposed to enable sharing organizations to optimally select their anonymization levels for k-anonymization under two potential attacks: background-knowledge attack and homogeneity attack. In the second level, a contract-theoretic model is proposed to enable the data collector to optimally reward the organizations for their data. The formulated problems are studied under single-time sharing and repeated sharing scenarios. Different Nash equilibria for the proposed game and the optimal solution of the contract-based problem are analytically derived for both scenarios. Simulation results show that the organizations can optimally select their anonymization levels, while the data collector can benefit from incentivizing the organizations to share their data. \end{abstract}

\begin{IEEEkeywords}
Data anonymization, k-anonymity, game theory, contract theory
\end{IEEEkeywords}}

\maketitle
\IEEEdisplaynontitleabstractindextext
\IEEEpeerreviewmaketitle

\IEEEraisesectionheading{\section{Introduction}\label{sec:introduction}}

\IEEEPARstart{T}{he} rise of Big Data has helped generate tremendous amounts of digital information that are continually being collected, analyzed, and distributed. This technology has helped organizations personalize their services, optimize their decision making, and help predict future trends \cite{zyskind2015decentralizing}. Nevertheless, these operations tend to raise public concern due to the fact that much of the data contain user sensitive information. To address these concerns and preserve user privacy, organizations engage in deploying robust security mechanisms to protect their data against different forms of cyber-attacks \cite{badsha2019privacy}. Consequently, concepts like data security, privacy, and trust have recently received significant attention in the literature as different forms of preserving the data \cite{soria2017individual,zhu2014correlated, keshavarz2020real, afghah2020cooperative,boreale2019relative,Domingo2019Steered}.

Yet, conventional security mechanisms do not become handy when it comes to data sharing. For instance, encryption based mechanisms can help to secure data shared between different parts or sites of the same organization, e.g., patients' remote monitoring \cite{eldosouky2018cybersecurity}. However, it is not feasible to widely share encrypted data, among many organizations, due to key management issues. One solution for data sharing is to remove the sensitive information, e.g., name, phone number, and address from the a dataset, before sharing it. However, it was shown that the remaining unique characteristics of the dataset can still be used to identify users {\cite{sweeney2002k}}. To further preserve the privacy when sharing data, anonymization techniques have been proposed to ensure that each record in a dataset is indistinguishable from the others, by removing identifiable features, and, hence, reducing the probability of identifying individual records. Prominent anonymization methods include $k$-anonymization {\cite{sweeney2002achieving}}, $l$-diversity {\cite{machanavajjhala2006diversity}}, $t$-closeness {\cite{li2007t}}, where $k$-anonymization was introduced first and then $l$-diversity and $t$-closeness were introduced as expansions on it that provide further modifications to the dataset, making it more challenging to differentiate the rindividual ecords.

Despite these developments, specific de-anonymization techniques, like background knowledge attacks \cite{li2009modeling} and homogeneity attacks \cite{wang2011enhanced}, are able to compromise the security of these approaches, i.e., $k$-anonymization, $l$-diversity, and $t$-closeness. As, $k$-anonymization is the basic technique behind $l$-diversity, and $t$-closeness, the research conducted in this paper focuses primarily on $k$-anonymization. In $k$-anonymization, the magnitude of $k$ directly corresponds to the level of privacy achieved on the dataset. However, it also corresponds to the amount of information loss from the dataset, which may reduce its usefulness to other organizations if the value of $k$ is too big. The goal for the organizations is, then, to optimally select $k$ such that it maximizes privacy while minimizing information loss. The work in \cite{liang2008infoloss} proposed two algorithms to reduce the information loss associated with $k$-anonymization. However, as these algorithms depend on the structure of the data, they cannot be generalized. To this end, choosing the optimal value of $k$, in $k$-anonymization, remains an open problem. 

\subsection{Related Work}

Different techniques have been proposed to enable the organizations to preserve the privacy of their shared information \cite{xu2016dynamic, baza2019b,de2017pracis,baza2020sharing,zhao2012collaborative,alawneh2008preventing}. The authors in \cite{xu2016dynamic} studied the case of asymmetric information sharing in which a data collector interacts with multiple data owners sequentially and each data owner possesses a record desired by the collector. The authors proposed a pricing technique to enable the data collector to determine the optimal value of the data. The work in \cite{baza2019b} addressed the privacy issue of data sharing using blockchains, in which the authors proposed a novel and efficient protocol to mitigate Sybil attacks by malicious users through a time-locked deposit protocol while minimizing the execution costs.
The work in {\cite{de2017pracis}} introduced a data analytics system for privacy preservation in data forwarding and aggregation, by using summary statistics of encrypted value aggregations.
The authors in \cite{baza2020sharing} proposed a technique to share trained machine learning models instead of the original data in the applications that use the data for prediction purposes.
Meanwhile, the works in \cite{zhao2012collaborative} and \cite{alawneh2008preventing} investigated the concept of data sharing using an information sharing platform, with the goal of achieving collaborative information sharing between the organizations.
For instance, the work in \cite{zhao2012collaborative} proposed a collaborative information sharing environment to provide cyber incident prevention, protection for the shared data.
The work in {\cite{alawneh2008preventing}} addressed the problem of information leakage by preventing employees in collaborating organizations from transferring the sensitive information accidentally or deliberately to non-authorized users. However, one limitation of the works in \cite{xu2016dynamic, baza2019b,de2017pracis,baza2020sharing,zhao2012collaborative,alawneh2008preventing} is that they focus on preserving the privacy of the shared data from a passive standpoint, i.e. they do not consider active attacks that try to directly reveal the sensitive information of the dataset.


Other works in literature have studied the privacy, under possible attacks, by analyzing the interactions between the sharing organizations and the attacker using game theory \cite{han2012game}. Game theory is a powerful mathematical framework that enables to study the interactions between different decision makers and that is widely used in many security domains \cite{das2020think,Eldosouky2020drones,Ferdowsi2020Interdependence}. 
Similarly, game theory has been recently used to preserve the privacy of shared data \cite{vakilinia20173,wan2017expanding,ezhei2017information}. For instance, the work in {\cite{vakilinia20173}} studied the trade-off between information sharing and the cost of privacy and preservation based on incentives from the information sharing platform. The authors in \cite{wan2017expanding} developed a game-theoretic approach to share genomic data that accounts for the adversarial behavior and the available resources. The work in \cite{ezhei2017information} studied the problem of sharing security data for investment purposes and the authors have proposed a game-theoretic approach to make decisions based on the privacy risk and the security knowledge needs. However, while the works in  \cite{vakilinia20173,wan2017expanding,ezhei2017information} help to preserve the privacy of the shared data, their approaches do not apply to data anonymization and, hence, the problem of optimizing the anonymization level.


Finally, we note that game theory has been also used for data anonymization \cite{liu2013game,chakravarthy2012coalitional,Adl2012privacy}. In \cite{liu2013game}, the authors introduced a game-theoretic approach to ensure $k$-anonymity when generating dummy data for a location-based service that involves multiple users. The authors in \cite{chakravarthy2012coalitional} used a coalitional game-theoretic model to fix the anonymization level of $k$-anonymity based on a given threshold for information loss. However, the works in \cite{liu2013game} and \cite{chakravarthy2012coalitional} are based on the assumption of a fixed anonymization level and do not enable to optimize its selection. The authors in \cite{Adl2012privacy} introduced an approach to optimize the anonymization level selection in a scenario consisting of three different parties: a data provider, a data collector, and a data user. The equilibrium of the game formulated in \cite{Adl2012privacy} was analytically derived to choose a value of $k$ that represents the shared agreements between the different parties. However, the work in \cite{Adl2012privacy} does not consider the presence of an attacker which can affect the utilities of different parties.


\subsection{Contributions}
The main contribution of this paper is a general framework to optimize the selection of anonymization levels under possible attacks. In particular, we propose a multi-level framework to study the interactions between the different entities involved in data sharing, i.e., the sharing organizations, an information sharing platform (data collector), and an attacker.
In the first level, we formulate a game-theoretic model to analyze the interactions between the organizations and the attacker where each organization chooses a value of $k$ that maximizes its outcome based on the expected attacks and the choices of other organizations. Meanwhile, an attacker can choose from a set of attacks based on its expected outcome. The framework considers two types of de-anonymization attacks which are background knowledge and homogeneity attack \cite{li2009modeling} and \cite{wang2011enhanced}. First, we consider the case of single-time sharing which corresponds to a static non zero-sum game, for which we analytically derive both the pure and the mixed-strategy equilibrium points. Then, we solve a dynamic game, which represents a repeated sharing scenario, by tracing the change of the utilities over time.

In the second level of the framework, we formulate an optimal contract-theoretic problem to study the interactions between the various organizations and the the data collector. In particular, the data collector offers contracts to the organizations that maximize its own reward while incentivizing each organization to accept a contract and to share its data with the data collector. The problem is formulated using the framework of contract theory \cite{bolton2005contract} that provides a set of tools for modeling the relations between a principal (the data collector) and a number of agents (the organizations). Note that, contract theory has been used in a wide variety of applications to solve the principal-agents problem, e.g., {\cite{eldosouky2020resilient,duan2012cooperative,contract_Abdel}}. However, to the best of our knowledge, this is the first work to use contract theory in information sharing problems to mitigate de-anonymization attacks. To this end, the optimal solution of the contract based problem is analytically derived under the single-time sharing scenario. Then, we propose an approach in the repeated sharing scenario to incentivize more organizations to share their data.





The rest of the paper is organized as follows.
The system model and the proposed two-tier framework are formulated in Section{~\ref{chapter_three}}.
The equilibrium analysis of the proposed game and the optimal solution of the contract-based model are derived for the static case in Section{~\ref{chapter_four}}.
The solutions of the dynamic case are derived in Section{~\ref{chapter_five}}.
Numerical results are presented and analyzed in Section~\ref{chapter_six}. Finally, conclusions are drawn in Section~\ref{chapter_seven}.

\section{System Model}\label{chapter_three}

\subsection{Background}
The goal of $k-$anonymization is to make each record in the shared dataset indistinguishable from at least $k-1$ other records \cite{sweeney2002achieving}. This can be achieved by applying some operations on the dataset attributes such as generalization and suppression. In general, the attributes of a dataset can be classified into key attributes such as name and address, quasi- identifiers such as date of birth, zip code, and gender, and sensitive attributes that are specific attributes such as medical records and salaries. When sharing a dataset, the researches are interested in the sensitive attributes which indicate the valuable information in the dataset. To preserve the privacy of the individuals in the dataset, key attributes are always removed before sharing. On the other hand, quasi-identifiers are the attributes that are processed to achieve the k-anonymity. For instance, in the generalization process, the quasi-identifiers are replaced with less specific values, e.g., removing the last two digits from the zip code. Suppression is used to remove specific records, usually outliers, that if kept will cause too much information loss to achieve a specific $k$ value under the generalization process.

After achieving a specific $k$ value, the probability of identifying an individual record is reduced to $1/k$. However, some attacks can increase this probability for an attacker. For instance, under background knowledge attack, the attacker collects some information to help eliminate some values from the dataset, so that, it increases the probability of identifying specific records ~\cite{wang2017modeling}. On the other hand, under homogeneity attack, the sensitive attributes might be the same for more than one record among the $k$ indistinguishable records, which will increase the probability to infer that sensitive information of a specific record ~\cite{machanavajjhala2006diversity}. Note that, in this work, we do not consider special datasets with frequent common values in the sensitive attributes that can self-reveal the sensitive attributes even after being anonymized. 

\subsection{System Model}
Consider an information exchange scenario in which some organizations interact with a data collector to share sets of data. The data collector is a system that collects data from different organizations and manages how this data is shared later for different purposes, e.g., research, marketing, etc. Typically, the shared data can contain sensitive information, therefore, the organizations usually apply an anonymization technique to their data before sharing it. We also consider the presence of an attacker that targets the shared data, at the data collector side, by applying de-anonymization techniques and extracting the sensitive information. Fig. \ref{fig:system} shows the interactions in our system model for the case of two organizations. As discussed earlier, we consider the case where the organizations perform $k$-anonymization technique to preserve the privacy. Here, we assume that the organizations are able to achieve the desired $k$ level by using both generalization and suppression techniques. 

\begin{figure}[t]
    \centering
    \includegraphics[width=8cm]{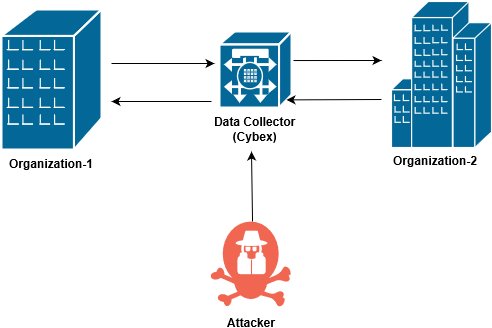}
    \caption{The interactions between the organizations, the data collector, and the attacker.}
    \label{fig:system}
\end{figure}

As there are three type of parties in the system, we propose a two-tier model to model the different interactions between the parties. First, we use game theory to model the interactions between the organizations and the attacker, to enable the organizations to strategically choose the appropriate level of data anonymization. In the second tier, we formulate a contract-theoretic problem for the data collector to optimize its rewards from the shared data by determining the optimal payments to the organizations. Next, we define the different entities in the system, following which, we formulate the mathematical models for both the system's tiers.

\subsubsection{Entities}
\textbf{Organizations}: Consider a set $\mathcal{N}$ of $N$ organizations that share their data. Each organization uses $k-$anonymization technique to make its shared data anonymous. The goal of organization $i$ is to choose the best value of $k_i$ to maximize its payoff, given other organizations $k$ values and the possible attacks on the data. Each organization can take an action, choosing an anonymization level from a set $\mathcal{D}$ of available levels.

\textbf{Attacker}: An attacker targets the data, at the data collector side, in order to reveal the sensitive information. We assume the attacker can anticipate the level of anonymization used, by analyzing the structure of the dataset. The attacker has three actions to choose from. Let $a \in \mathcal{A}=\left\{B, H, N\right\}$ represents the attacker's action of performing \emph{background knowledge} attack, performing \emph{homogeneity attack}, or no-attack, respectively.

\textbf{Data Collector}: A data collector is a system, or independent organization, whose objective is to collect data from the different organizations and earn profits by performing data mining. Here, the data collector needs to find the appropriate rewards, so that it can achieve a positive outcome while incentivizing the organizations to share their data.

\subsubsection{First Tier: A Game-theoretic Model}\label{Sec:first_Tier}
In the first tier of the framework, we study the interactions between the organizations and the attacker. Since the attacker has opposing goals to the organizations, game theory \cite{han2012game} represents a suitable mathematical framework for modeling such interactions. In the proposed game, each player wants to maximize its payoff based on its actions and the other players' actions. The players' payoffs are given in the form of utility functions which map the players' actions to their outcome as discussed next.

\textbf{Organizations}: The utility of each organization is given as a function in the reward it gets from the data collector, $r_i(k_i)$, the cost for applying the anonymization technique, $c_i(k_i)$, the level of trust in the data collector, $T_i(k_i)$, and the probability of data breach, $b_i(k_i,k_{-i},a)$, where $k_{-i}$ refers to the other organization's action. 
Let $u_i$ be the utility of organization $i$, it can then be given by:
\begin{equation}\label{eq:defenderPayOff}
u_i(k_i,k_{-i},a) = r_i(k_i) \cdot (1 - b_i(k_i,k_{-i},a)) - c_i(k_i) + T_i(k_i),
\end{equation}
where the first term represents the probability of receiving the reward based on all players' actions. The reward function $r_i(k_i)$ is given by the data collector, as discussed later.

The cost function $c_i(k_i)$ depends on the choice of $k_i$. Here, we propose to define the anonymization cost as a function in the computational complexity of executing $k$-anonymization procedure. This complexity was shown in \cite{meyerson2004complexity} to equal $\mathcal{O}(|V|^{2k})$, where $|V|$ represents the number of different subsets of the dataset based on their common attributes for anonymization purpose. Using this complexity function, we propose to define the cost function $c_i(k_i)$ as:

\begin{equation}\label{eq:cost1}
c_i(k) = log \left( |V|^{2k} \right), 
\end{equation}
where the $log$ function converts the time complexity into monetary values on the same scale as the reward values.

Next, we consider the level of trust in the data collector $T_i(k_i)$, which represents how much an organization trusts the data collector to protect its dataset against cyber threats, e.g., breaches. We propose to define the level of trust as:
\begin{equation}\label{eq:trust}
    T_i(k_i)= \gamma \cdot k_i
\end{equation}
where, $\gamma$ is the coefficient of trust. Note that, the level of trust is defined as an increasing function in $k_i$ as when the anonymization level increases, the organizations will be more confident that their information will be safe even under breaches as it is less informative.

Finally, we consider the breach probability for each organization's shared data, $b_i(k_i,k_{-i},a)$. We use the probability  of breach defined in~\cite{Gordon:2002:EIS:581271.581274} as:
\begin{equation}\label{eq:co-efficient of loss}
    b(a,k_i) = \frac{p(a)}{\alpha k_i+1},
\end{equation}
Where $p(a)$ is probability of a successful attack, based on the attack type, and $\alpha>0$ is a measure of information security. Note that, in \cite{Gordon:2002:EIS:581271.581274}, the probability of breach is given as a function in the organization's investment which is analogous to the level of anonymization $k_i$ as the organization's investment in protecting its shared data. 

Equation \eqref{eq:co-efficient of loss} represents the organization's own probability of breach. In case of multiple organizations sharing to the same platform, this will increase the probability of successful attacks as an attacker can link information from different datasets to identify the records~\cite{sattar2014probabilistic}. Here, we propose to model this interdependency similar to the model in~\cite{ogut2005cyber} such that the interdependent risk between two organizations is given as:
\begin{equation}\label{eq:dependentrisk}
    b(a,k_i,k_{-i})  = 1- (1- \frac{p(a)}{\alpha k_i+1})(1- \frac{p(a)}{\alpha k_{-i}+1}). 
\end{equation}

Substituting \eqref{eq:dependentrisk} in \eqref{eq:defenderPayOff}, the utility of any organization, in terms of its and other organizations' anonymization levels, can then be given as:
\begin{dmath}\label{eq:updatedutility}
    u_i(k_i,k_{-i},a) = r_i(k_i) \cdot (1- \frac{p(a)}{\alpha k_i+1}) \cdot (1- \frac{p(a)}{\alpha k_{-i}+1}) - c_i(k_i)+\gamma \cdot k_i.
\end{dmath}


\textbf{Attacker}: For the case of two organizations, the attacker's utility can be given in terms of its probability of achieving the reward from the information and the cost to apply its attack. Thus, we define the attacker's utility $u_a$ as follows:
\begin{equation}\label{eq:attackerPayOff}
u_a(k_1,k_2,a)= b(a,k1,k_2)R_a-c_a(a),
\end{equation}
where $R_a$ is the reward for revealing the real data and this reward can be achieved based on the combined breach probabilities of the datasets and $c_a(a)$ is the cost of performing each type of the attack. 

Note that \eqref{eq:dependentrisk} can be rewritten as:
\begin{equation}
b(a,k_1,k_2) = \frac{p(a)}{(\alpha k_1+1)} + \frac{p(a)}{(\alpha k_2+1)} - \frac{p(a)}{(\alpha k_1+1)} \frac{p(a)}{(\alpha k_2+1)},
\end{equation}
and, thus, the attacker's utility in \eqref{eq:attackerPayOff} can be written as:
\begin{dmath}\label{eq:attackerPayOff2}
u_a(k_1,k_2,a)= \Big(\frac{p(a)}{(\alpha k_1+1)} + \frac{p(a)}{(\alpha k_2+1)} - \frac{p(a)}{(\alpha k_1+1)} \frac{p(a)}{(\alpha k_2+1)}\Big)R_a-c_a(a),
\end{dmath}

Here, according to the nature of the homogeneity attack, the attacker will benefit if the two organizations are using the same anonymization level for sharing datasets with related information. This is because the similar structure of the shared data will increase the probability of identifying the individual records.

Let $p(H_s)$ be the success probability of the homogeneity attack when the organizations use the same anonymization level. Similarly, let $p(H_d)$ be the success probability of the homogeneity attack when the organizations use different anonymization levels, such that $p(H_s)>p(H_d)$. We assume $p(B)>p(H_d)>0$, i.e., the success probability of background attack is higher than that of the homogeneity attack. That is because the attacker can use the \emph{background knowledge} to link with the shared data and increase its chances of identifying the records. However, $p(B)$ can be higher or lower than $p(H_s)$. 

For the attack cost, the cost of performing the background knowledge attack is assumed to be higher than that of the homogeneity attack, i.e., $c_a(B)>c_a(H)>0$. This is because the attacker will spend more time collecting the background information and linking the similar information. Note that, when the attacker chooses not to attack, its utility $u_a(N,k_1,k_2)$ will equal zero. This choice will be superior to the attacker if the cost of performing the attack exceeds the reward from revealing the information.

To this end, we define a game $\mathcal{G}$= $\{\mathcal{N}$, $\mathcal{D}$, $\mathcal{A}$, $\mathcal{U}\}$ such that $\mathcal{N}$ is the set of the players which include all the organizations as well as the attacker, $\mathcal{D}$ is the set of the organization's strategies, $\mathcal{A}$ is the set of attacker's strategies, and $\mathcal{U}$ is the set of the all players' utilities. 
 
 \subsubsection{Second Tier: A Contract-theoretic Model}\label{Sec:second_Tier}

In the second tier of the framework, we study the interactions between the data collector and the organizations. We notice that the data collector gives rewards to the organizations for their shared data. Since the data collector has the power to price the data, it can make take-it or leave-it offers, i.e., if an organization was offered a low price, it can refuse it and does not share its data with the data collector. To this end, we propose to use contract theory \cite{bolton2005contract} to model the data collector's problem of finding the optimal prices for the data.

The goal of the data collector is to collect the data from the organizations and make profits by performing data mining on it. The data collector will be referred to as the principal in this section. The utility function of the principal can be given as:
\begin{equation}\label{cybex}
    U_d = \sum_{i=1}^{N} \theta_i \left( v_i -  r_i(k_i) \right),
\end{equation}
where, $v_i$ is the principal's evaluation of the received data from organization $i$ and $r_i(k_i)$ is the reward paid to organization $i$ for its data. The evaluation $v_i$ represents the data collector's expected profits from obtaining this data. Finally, $\theta_i$ is the organization's type, which specifies how the principal perceives different organizations in the market such that $0 \le \theta_i \le 1$. Since the main difference between the organizations is their $k$ selection, $\theta_i$ needs to be a function in $k_i$.

Here we propose to define $\theta = 1 / k$ such that it satisfies $0 \le \theta_i \le 1$. Moreover, the principal's utility will be a declining function in $k$ such that when the level of anonymization increases, the information will be less informative and, hence, its value will decrease. In return, the data collector will give less rewards to the organizations for their data.


Note that, by using $\theta = 1 / k$ when $k=1$, i.e., no anonymization, the data collector can obtain the full value of the reward and the organizations can obtain the full reward. For every $k>1$, the reward will be declining such that, for large values of $k$, e.g., $k>10$, any increase in $k$ will cause a small decrease in $r_i$. This can be interpreted as when the anonymization level increases, the information will be less useful up to some point where the increased $k$ will have very small effect on the information loss (reward). This can be captured by the heavy tail of the function $\theta = 1 / k$.


The principal's problem is to design different contracts for the different organization types such that organizations from the same type will be given equal payments. The contracts offered
by the principal need to be feasible for the organizations, i.e., they need to be persuading for the organizations to accept. To this end, the contracts must satisfy two key properties \cite{bolton2005contract} which are individual rationality (IR) and incentive compatibility (IC).

\begin{enumerate}
    \item \textbf{Individual Rationality (IR):} As the organizations are rational, the given payments need to ensure non-negative utility for each organizations, i.e.,
    \begin{equation}\label{ineq:1}
        \theta_i r_i(k_i) - c_i(k_i) + T_i(k_i) \ge 0,~~ i \in \mathcal{N}.
    \end{equation}
    
    
    \item \textbf{Incentive Compatibility (IC):}  Each organization must always prefer the contract designed for its type, over all other contracts. This ensures that an organization can achieve a better utility only if it chooses the contract designed for its type.
    
    \begin{align}
    \hspace{-0.2cm}\theta_i r_i(k_i) - c_i(k_i) + T(k_i) &\ge \theta_i r_j(k_j) - c_j(k_j) + T_j(k_j), \nonumber \\
        &~~ \forall i ~, ~j \in \mathcal{N}, ~i \ne ~j.
    \end{align}
    
\end{enumerate}    

To this end, the principal's problem of finding the optimal contracts can be given as:
\begin{equation}\label{eq:optimal}
\begin{aligned}
& \underset{r_i(k_i)}{\text{max}}
&& \sum_{i=1}^{ N}\theta_i (v_i -  r_i(k_i)) \\
& \text{s.t}
&& \theta_i r_i(k_i) - c_i(k_i) + T_i(k_i) \ge 0,  \forall i \in \mathcal{N}, \\
&&& \theta_i r_i(k_i) - c_i(k_i) + T(k_i) \ge \theta_i r_j(k_j) - c_j(k_j) +\\
&&& T_j(k_j)~~,\forall i ~, ~j \in \mathcal{N}, ~i \ne ~j.   \\
\end{aligned}
\end{equation}







\section{Single Time Sharing} \label{chapter_four}

In this section, we solve the problems formulated in Section \ref{chapter_three}, of the proposed two-tier model, for the static case. In this case, we assume that the organizations share their data only once with the data collector, while the attacker tries to identify the anonymized data. In particular, we first solve the optimal contract-based model to determine the optimal rewards. Then, we derive the different Nash equilibria of the static game-theoretic model using the optimal rewards.


\subsection{Optimal Contracts}\label{sec:static_contract}

To solve the problem in \eqref{eq:optimal}, we notice that the number of constraints is large. For instance, the number of IR constraints equals $N$ and the number of IC constraints equals $N(N-1)$. However, the number of constraints can be significantly reduced as shown next.

\begin{lemma} \label{lemm_cont1}
The IC constraints are equivalent to $r_i>r_j$ for every pair of organizations such that $\theta_i > \theta_j$.
\end{lemma}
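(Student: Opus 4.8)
The plan is to prove the two directions of the equivalence separately: the ``only if'' (necessity of the reward ordering) direction is a one-line summation argument, while the ``if'' (sufficiency) direction needs a chaining argument over an ordering of the organization types, together with the fact that the cost and trust terms are common to all organizations.

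For necessity, fix any two organizations $i,j\in\mathcal{N}$ with $\theta_i>\theta_j$ and write the IC inequality once for $i$ against $j$ and once for $j$ against $i$:
\begin{align*}
\theta_i r_i(k_i)-c_i(k_i)+T_i(k_i) &\ge \theta_i r_j(k_j)-c_j(k_j)+T_j(k_j),\\
\theta_j r_j(k_j)-c_j(k_j)+T_j(k_j) &\ge \theta_j r_i(k_i)-c_i(k_i)+T_i(k_i).
\end{align*}
Adding these cancels every $c$ and $T$ term and leaves $\theta_i r_i(k_i)+\theta_j r_j(k_j)\ge \theta_i r_j(k_j)+\theta_j r_i(k_i)$, i.e. $(\theta_i-\theta_j)\big(r_i(k_i)-r_j(k_j)\big)\ge 0$. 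Since $\theta_i>\theta_j$, this forces $r_i(k_i)\ge r_j(k_j)$ (strictly whenever one of the two IC inequalities is strict), which is exactly the claimed reward ordering. Hence any contract profile satisfying IC satisfies $r_i>r_j$ whenever $\theta_i>\theta_j$.

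For sufficiency I would first relabel the organizations so that $\theta_1>\theta_2>\dots>\theta_N$, equivalently (using $\theta=1/k$) so that $k_1<k_2<\dots<k_N$, and record that the cost $c_i(k_i)=\log(|V|^{2k_i})$ and trust $T_i(k_i)=\gamma k_i$ are the \emph{same} monotone functions of $k$ for every organization, so that the cost--trust differences appearing on the right-hand sides of the IC inequalities are themselves ordered along this labelling. Then, exactly as in standard screening problems with a single-crossing structure, it suffices to verify the IC inequalities between \emph{adjacent} types: a global constraint between types $i<j$ is recovered by chaining the local constraints between $i$ and $i{+}1$, between $i{+}1$ and $i{+}2$, \dots, between $j{-}1$ and $j$, where the monotonicities $\theta_i>\theta_{i+1}>\cdots$ and $r_i\ge r_{i+1}\ge\cdots$ are used to bound the cross terms; the upward constraints follow symmetrically. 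Finally one notes that in problem \eqref{eq:optimal} the objective $\sum_i\theta_i(v_i-r_i(k_i))$ is strictly decreasing in each $r_i(k_i)$, so the remaining (adjacent) IC constraints bind at the optimum, and the whole system of $N(N-1)$ IC inequalities collapses to the single family of ordering conditions.

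I expect the sufficiency direction to be the main obstacle, and in particular making precise in what sense the IC constraints are ``equivalent'' to the reward ordering: the summation argument only shows the ordering is \emph{necessary}, and recovering full incentive compatibility from it relies on the common structure of $c(\cdot)$ and $T(\cdot)$ and on the binding of the IC constraints forced by the principal's objective. The honest statement being proved is that, \emph{within} \eqref{eq:optimal}, replacing the IC block by the conditions $r_i>r_j$ for $\theta_i>\theta_j$ leaves the optimum unchanged; carrying out that reduction rigorously — rather than the easy necessity half — is where the real work lies.
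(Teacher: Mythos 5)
Your necessity argument is exactly the paper's proof: write the downward and upward IC constraints for a pair with $\theta_i>\theta_j$, add them so the cost and trust terms cancel, and conclude $(\theta_i-\theta_j)\big(r_i(k_i)-r_j(k_j)\big)\ge 0$, hence the reward ordering. Where you go beyond the paper is in recognizing that the lemma is stated as an \emph{equivalence} while the summation only gives one direction; the paper's own proof stops at necessity and then simply asserts (after the lemma) that the IC block can be reduced to the $N$ adjacent constraints, without proving any converse. Your sketched sufficiency half is therefore addressing a gap in the lemma as stated rather than reproducing anything in the paper; as written it is not yet a proof (chaining adjacent constraints needs more than $\theta_1>\theta_2>\dots$ and $r_1\ge r_2\ge\dots$: since the IC pair for $i,j$ amounts to $\theta_j\big(r_i-r_j\big)\le \big(c_i(k_i)-T_i(k_i)\big)-\big(c_j(k_j)-T_j(k_j)\big)\le \theta_i\big(r_i-r_j\big)$, the cost--trust differences must be compatible with the reward differences, which does not follow from the ordering alone), and your appeal to the constraints binding at the optimum changes the claim from an equivalence of constraint sets to an equivalence of optima. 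In short: the half you prove matches the paper's proof verbatim in spirit; the half you flag as ``where the real work lies'' is indeed unproven, both in your proposal and in the paper itself.
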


\begin{proof}
We prove this lemma by using the IC constraints for any two organizations with types $\theta_1$ and $\theta_2$ and $\theta_1 > \theta_2$. The downward and the upward IC constraints are:
\begin{align*}
&\theta_1 r_1(k_1) - c_1(k_1) + T_1(k_1) \ge \theta_1 r_2(k_2) - c_2(k_2) + T_2(k_2),\\
&\theta_2 r_2(k_2) - c_2(k_2) + T_2(k_2) \ge \theta_2 r_1(k_1) - c_1(k_1) + T_1(k_1).    
\end{align*}
By adding the two inequalities, we get:
\begin{align*}
&\theta_1 r_1(k_1) - c_1(k_1) + T_1(k_1) +  \theta_2 r_2(k_2) - c_2(k_2) + T_2(k_2) \ge \\ 
&~~\theta_1 r_2(k_2) - c_2(k_2) + T_2(k_2) + \theta_2 r_1(k_1) - c_1(k_1) + T_1(k_1),
\end{align*}
which can be rearranged as:
\begin{align*}
&\theta_1 r_1(k_1)  + \theta_2 r_2(k_2)  \ge \theta_1 r_2(k_2) + \theta_2 r_1(k_1), \\
&(\theta_1 - \theta_2) r_1(k_1) \ge (\theta_1 - \theta_2) r_2(k_2).
\end{align*}
Since $\theta_1 > \theta_2$, we can conclude that $r_1(k_1) > r_2(k_2)$.
\end{proof}

Using Lemma \ref{lemm_cont1}, we can reduce the number of IC constraints to just $N$ constraints of each two consecutive organization types. Next, we show the solution to the reduced contract problem for two organizations sharing the same data but with different anonymization values $k_L$ and $k_H$ such that $k_L < k_H$.

\begin{theorem}\label{Theorem1}
The optimal rewards for two organizations sharing the same data and using $k_L$ and $k_H$, such that $k_L < k_H$ is:
\begin{align*}
    r_H(k_H) &= \frac{c_H(k_H) - T_H(k_H)}{\theta_H}, \\
    r_L(k_L) &= \mathrm{max}\left(\frac{c_L(k_L) - T_L(k_L)}{\theta_L}, r_H(k_H)\right) \\
\end{align*}
\end{theorem}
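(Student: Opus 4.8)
The plan is to specialize the contract program \eqref{eq:optimal} to $N=2$, collapse its constraints using Lemma~\ref{lemm_cont1}, and then read off the optimum of the resulting two-variable linear program. Set $\theta_L=1/k_L$ and $\theta_H=1/k_H$; since $k_L<k_H$ and $\theta=1/k$ is decreasing, we have $\theta_L>\theta_H$, so the organization using $k_L$ is the ``high type.'' For brevity write $\phi_L:=c_L(k_L)-T_L(k_L)$ and $\phi_H:=c_H(k_H)-T_H(k_H)$. With $N=2$ the two IR constraints read $r_L(k_L)\ge \phi_L/\theta_L$ and $r_H(k_H)\ge \phi_H/\theta_H$, while by Lemma~\ref{lemm_cont1} the two IC constraints are equivalent to the single monotonicity condition $r_L(k_L)\ge r_H(k_H)$ (read with $\ge$ at the boundary, which is exactly what the $\max$ in the statement accommodates). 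That the two organizations share the same data only makes the objective's evaluation terms and the cost/trust functions common to both; it is immaterial once we optimize over the rewards.

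Next I would use monotonicity of the objective. The principal's payoff equals $\theta_L v_L+\theta_H v_H-\big(\theta_L r_L(k_L)+\theta_H r_H(k_H)\big)$, so maximizing it over the feasible polyhedron is the same as minimizing $\theta_L r_L(k_L)+\theta_H r_H(k_H)$, which is strictly decreasing in each reward because $\theta_L,\theta_H>0$. Hence the principal pushes both rewards to their smallest feasible values. The reward $r_H(k_H)$ is bounded below only by its own IR constraint---the monotonicity constraint only upper-bounds $r_H(k_H)$ given $r_L(k_L)$, and decreasing $r_H(k_H)$ merely relaxes it---so the optimum takes $r_H(k_H)=\phi_H/\theta_H=\big(c_H(k_H)-T_H(k_H)\big)/\theta_H$. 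Fixing $r_H(k_H)$ at this level, the reward $r_L(k_L)$ must satisfy $r_L(k_L)\ge \phi_L/\theta_L$ and $r_L(k_L)\ge r_H(k_H)$, so its smallest feasible value is $\max\!\big(\phi_L/\theta_L,\,r_H(k_H)\big)$, which is the claimed formula.

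To make this a bona fide optimality argument rather than a greedy heuristic, I would finish with a direct lower bound: every feasible pair has $r_H(k_H)\ge \phi_H/\theta_H$ and $r_L(k_L)\ge \max\!\big(\phi_L/\theta_L,\,r_H(k_H)\big)\ge \max\!\big(\phi_L/\theta_L,\,\phi_H/\theta_H\big)$, so $\theta_L r_L(k_L)+\theta_H r_H(k_H)$ is bounded below by its value at the proposed pair; since that pair is itself feasible (IR by construction, IC by Lemma~\ref{lemm_cont1}), it attains the bound and is therefore optimal.

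The main obstacle I anticipate is the bookkeeping around Lemma~\ref{lemm_cont1} rather than any hard computation: one must correctly identify the low-$k$ organization as the high-$\theta$ type, invoke the lemma to swap the full IC system for plain monotonicity $r_L(k_L)\ge r_H(k_H)$, and adopt the boundary convention ($\ge$ instead of strict $>$) so that the degenerate case $\phi_L/\theta_L<r_H(k_H)$---where the two organizations receive equal rewards---is covered by the $\max$. Everything after the reduction is a routine two-variable optimization.
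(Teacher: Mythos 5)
Your proposal is correct and follows essentially the same route as the paper: specialize \eqref{eq:optimal} to two types, use Lemma~\ref{lemm_cont1} to replace the IC constraints by the monotonicity condition $r_L(k_L)\ge r_H(k_H)$, let the high-$k$ organization's IR constraint bind, and set $r_L(k_L)$ to the maximum of its own IR bound and $r_H(k_H)$. Your explicit lower-bound verification and the remark that the boundary case should be read with $\ge$ (so the $\max$ covers equal rewards) are slightly more careful than the paper's argument but do not constitute a different approach.
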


\begin{proof}
When two organizations share the same data with different anonymization levels, the principal's problem becomes:
\begin{equation*}
\begin{aligned}
& \underset{r_i(k_i)}{\text{max}}
&& \theta_L (v -  r_L(k_L)) + \theta_H (v -  r_H(k_H)) \\
& \text{s.t}
&& \theta_L r_L(k_L) - c_L(k_L) + T_L(k_L) \ge 0, \\
&&& \theta_H r_H(k_H) - c_H(k_H) + T_H(k_H) \ge 0, \\
&&& r_L(k_L) > r_H(k_H).\\
\end{aligned}
\end{equation*}

Since $r_L(k_L)$ needs to be higher than $r_H(k_H)$ according to the last constraint, the second constraint will bind, i.e., $\theta_H r_H(k_H) - c_H(k_H) + T_H(k_H) = 0$. This will lead $r_H(k_H)$ to equal $\frac{  c_H(k_H) - T_H(k_H)}{\theta_H}$. The value of $r_L(k_L)$ needs to be at least $r_H(k_H)$ and needs to satisfy the first constraint, therefore, $r_L(k_L)$ will be the higher of both values, i.e., $\mathrm{max}\left(\frac{c_L(k_L) - T_L(k_L)}{\theta_L}, r_H(k_H)\right)$.
\end{proof}

From theorem \ref{Theorem1}, we notice that the organization with higher $k_H$ will get a reward that makes its utility equals zero, i.e., not benefiting nor losing from the sharing. However, this might be seen as being not an enough incentive for the organizations to share their data. Therefore, in the dynamic case in Section \ref{chapter_five}, we discuss long-term contract to enable the data collector to design incentivizing contracts for the organizations when they sign a long-term contract. Note in theorem \ref{Theorem1}, the solution is shown for the case of two organizations' types, however, the solution of the general case of multiple organization types will follow similarly. In that case, the organization with the highest $k$ value will get a reward that makes its utility zero. Each of the remaining organizations will get a reward that equals the maximum of its cost and the higher organizations' reward.

Finally, we note that the solution presented in theorem \ref{Theorem1} represents the case where both organizations share the same dataset. However, for the general case when each organization has a different value of $v_i$, the principal needs to solve the general problem in \eqref{eq:optimal} using any optimization technique, e.g., linear programming. In such case, each organization will achieve a different utility that does not need to equal zero.

\subsection{Proposed Game Solution}\label{sec:static_game}

After determining the rewards for each anonymization level, we can use these values in solving the formulated game. Recall that, in the organizations' utilities in \eqref{eq:updatedutility}, each organization obtains a fraction of the reward  $r_i(k_i)$ based on the attack's success probability. Let
\begin{dmath}
     \delta = (1- \frac{p\textsubscript{max}(a)}{\alpha k_i+1}) \cdot (1- \frac{p\textsubscript{max}(a)}{\alpha k_{-i}+1})
\end{dmath}
be the minimum fraction of $r_i(k_i)$ that an organization can achieve based on the maximum success probability of the available attacks, i.e., $p\textsubscript{max}(a)$. We refer to $\delta r_i(k_i)$ as the minimum profit factor.

 In the proposed game, the goal of each player is to take actions to maximize its utility given the actions of other players. When no player can improve its utility by unilaterally changing its actions, the game is said to be at equilibrium. The notion of equilibrium, in game theory, is referred to as Nash equilibrium \cite{sengupta2009game}. Nash equilibrium can be either pure Nash equilibrium, or mixed-strategy Nash equilibrium. A pure strategy equilibrium, is when every player has only one action/ strategy at equilibrium. On the other hand, a mixed Nash equilibrium represents a probability distribution over each player's set of available actions~\cite{eldosouky2016single}. 
 
 The proposed game, under the static scenario, is a finite static non-zero-sum game which is known to have a Nash equilibrium, either pure or mixed-strategy\cite{han2012game}. For the sake of analytical tractability, we consider the case in which each organization can choose between two different $k$ values, i.e., $k_L$ and $k_H$. These values represent choosing a low and high values for $k$, respectively. Based on these values, each organization will have two minimum profit factors $\delta_H$ and $\delta_L$ corresponding to the choice of $k_L$ and $k_H$, respectively.

Let $p_1$ be the probability for the first organization to choose $k_L$ such that it chooses $k_H$ with a probability $1-p_1$. Similarly, the second organization can choose $k_L$ and $k_H$ with the probabilities $p_2$ and $1-p_2$, respectively. The attacker, on the other hand, will have a probability distribution of $q_B,q_H,q_N$ for choosing the actions $B$, $H$, and $N$, respectively. We start the analysis by considering the cases in which the game $\mathcal{G}$ can have a pure strategy Nash equilibrium.

\begin{proposition}\label{prop1}
Let $k_{i}^* = \mathrm{argmax}_{k_i} r_i(k_i) - c_i(k_i)+\gamma \cdot k_i$. 
Then, the tuple $(k_i^*,k_{-i}^*,N)$ constitute a pure strategy Nash equilibrium for $\mathcal{G}$ when the attacker cannot achieve a positive utility.
\end{proposition}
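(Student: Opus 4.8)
The plan is to verify directly the two defining conditions of a pure-strategy Nash equilibrium at the candidate profile $(k_i^*,k_{-i}^*,N)$: that no organization can profitably deviate from $k_i^*$ while the other organization plays $k_{-i}^*$ and the attacker plays $N$, and that the attacker cannot profitably deviate from $N$ while the organizations play $(k_i^*,k_{-i}^*)$.

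First I would substitute $a=N$ into the organization utility \eqref{eq:updatedutility}. Since no-attack corresponds to $p(N)=0$, both interdependence factors $\bigl(1-\tfrac{p(N)}{\alpha k_i+1}\bigr)$ and $\bigl(1-\tfrac{p(N)}{\alpha k_{-i}+1}\bigr)$ collapse to $1$, so $u_i(k_i,k_{-i},N)=r_i(k_i)-c_i(k_i)+\gamma k_i$. The crucial consequence is that this expression no longer depends on $k_{-i}$: against the attacker action $N$, organization $i$'s payoff is decoupled from the other organization's choice. Hence organization $i$'s best response is any maximizer of $r_i(k_i)-c_i(k_i)+\gamma k_i$ over $\mathcal{D}$, which is exactly $k_i^*$ by definition. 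This holds simultaneously for both organizations, so neither has a profitable unilateral deviation from $(k_i^*,k_{-i}^*,N)$.

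Next I would check the attacker. As already noted in the model, choosing $N$ yields $u_a(N,k_1,k_2)=0$ for any $(k_1,k_2)$ (the breach probability vanishes and $c_a(N)=0$). The hypothesis that the attacker cannot achieve a positive utility means precisely that, at the profile $(k_1^*,k_2^*)$, $u_a(k_1^*,k_2^*,B)\le 0$ and $u_a(k_1^*,k_2^*,H)\le 0$, i.e., $b(a,k_1^*,k_2^*)R_a\le c_a(a)$ for each $a\in\{B,H\}$ using \eqref{eq:attackerPayOff2}. Therefore $N\in\mathrm{argmax}_{a\in\mathcal{A}}u_a(k_1^*,k_2^*,a)$, so the attacker has no profitable unilateral deviation either. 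Combining the two observations, no player can improve by unilaterally changing its action, which is exactly the definition of a pure-strategy Nash equilibrium, so $(k_i^*,k_{-i}^*,N)$ is one.

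There is no genuinely hard step here; the main subtlety is the interpretation of the hypothesis — it only needs to hold at the candidate profile $(k_1^*,k_2^*)$, not universally — together with the observation that setting $a=N$ removes the coupling between the organizations, which is what reduces the organizations' side of the argument to a one-line computation rather than a joint best-response analysis. The same reasoning extends verbatim to $N$ organizations, since $a=N$ makes each organization's utility separable across the other players' actions.
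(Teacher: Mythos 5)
Your proof is correct and follows essentially the same route as the paper's: set $a=N$ so that each organization's utility collapses to $r_i(k_i)-c_i(k_i)+\gamma k_i$, which is decoupled from $k_{-i}$ and maximized at $k_i^*$, and then observe that the attacker's non-positive utilities for $B$ and $H$ make $N$ a best response, so no player can deviate profitably. The only (harmless) difference is that the paper reads the hypothesis as $N$ dominating the attacker's other actions across all combinations of $k_L$ and $k_H$, whereas you only require non-positivity at the candidate profile $(k_1^*,k_2^*)$ — a weaker condition that still suffices for the equilibrium claim.
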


\begin{proof}
We note that the attacker's utility for no-attack is zero, i.e., $u_a(k_1,k_2,N)=0$. The attacker can only turn to this choice if all the other actions yield a negative utility, i.e., all the utility instances for choosing $B$ and $H$ with the different combinations of $k_L$ and $k_H$, for each organization, will result in a negative attacker's utility. Therefore, choosing the action $N$ will be a dominant strategy for the attacker. In this case, each organization's utility will be:
\begin{equation}\label{eq:noAttack}
    u_i(k_i,k_{-i},N) = r_i(k_i) - c_i(k_i)+\gamma \cdot k_i,
\end{equation}
which clearly depends only on the organization's action and not on the other players' actions. In this case, each organization will chose the value of $k$ that maximizes its utility in \eqref{eq:noAttack}. Hence, $k_i^* = \mathrm{argmax}_{k_i} r_i(k_i) - c_i(k_i)+\gamma \cdot k_i$ will represent the optimal organization's choice under no-attack scenario. In this case, no player will have an incentive to change its choice and, therefore, the actions tuple $(k_i^*,k_{-i}^*,N)$ is a pure strategy Nash equilibrium for the game.
\end{proof}
From Proposition \ref{prop1}, the attacker's probability $q_N$ of choosing the action $N$ will be either $1$ or $0$ based on whether the action $N$ dominates the other actions or it is being dominated by another action. Thus, the actions $B$ and $H$ can be selected by probabilities $q$ and $1-q$, respectively, when $N$ is not selected. 

Similar to the attacker, each organization can have a dominant strategy under some circumstances and, hence, the probability $p$ can be either $0$ or $1$ based on the dominant strategy.

\begin{proposition}\label{prop2}
Each organization will have a dominant strategy when the rewards assigned from the data collector $r_i(k_i)$ are high enough such that the minimum profit factor is the dominant factor in the organization's utility, i.e., $\delta_H r_i(k_H)  > \gamma \cdot k_H - c_i(k_H)$ and $\delta_L r_i(k_L)  > \gamma \cdot k_L - c_i(k_L)$. The dominant strategy can then be given as the solution of:
\begin{equation*}
k_i^* = \mathrm{argmax}_{i} \delta_i {r_i(k_i)} - c_i(k_i)+ \gamma \cdot k_i, ~~ i \in \{L,H\}.
\end{equation*}

\end{proposition}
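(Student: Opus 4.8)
The plan is to prove the stronger fact that, under the stated reward condition, the level $k_i^*$ is a best response for organization $i$ against \emph{every} action of the attacker and \emph{every} level played by the other organization; this uniform best-response property is exactly the statement that $k_i^*$ is a dominant strategy, and it immediately forces the organization's mixing probability $p_i$ to be $0$ or $1$, as claimed in the text following the proposition.

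First I would put the payoff \eqref{eq:updatedutility} into its factored form $u_i(k_i,k_{-i},a) = r_i(k_i)\,\phi_a(k_i)\,\phi_a(k_{-i}) - c_i(k_i) + \gamma k_i$, where $\phi_a(k) := 1 - p(a)/(\alpha k+1) \in (0,1]$, and observe that the opponent-and-attacker factor $\phi_a(k_{-i})$ multiplies the reward in exactly the same way for both candidate levels $k_i \in \{k_L,k_H\}$. Next I would sandwich the composite reward fraction: when organization $i$ selects $k_j$, the product $\phi_a(k_j)\phi_a(k_{-i})$ ranges over $[\delta_j,1]$ as $a$ and $k_{-i}$ vary, where $\delta_j$ is the minimum profit factor for the choice $k_j$ (attained at $p(a)=p_{\max}$ and the least favorable opponent level, and equal to $1$ in the no-attack corner). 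Hence the realized payoff of choosing $k_j$ always lies in the interval $\big[\,\delta_j r_i(k_j) - c_i(k_j) + \gamma k_j,\ r_i(k_j) - c_i(k_j) + \gamma k_j\,\big]$.

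The core step is then to show that the reward condition makes the two payoff intervals (one for $k_L$, one for $k_H$) \emph{ordered}, so that the level $k_{j^*}$ maximizing $\delta_j r_i(k_j) - c_i(k_j) + \gamma k_j$ dominates the other level for every attacker action and every opponent level simultaneously. Here the hypotheses $\delta_H r_i(k_H) > \gamma k_H - c_i(k_H)$ and $\delta_L r_i(k_L) > \gamma k_L - c_i(k_L)$ do the work: they guarantee that the reward summand $\phi_a(k_j)\phi_a(k_{-i}) r_i(k_j)$ dominates the additive cost/trust correction $\gamma k_j - c_i(k_j)$ across the whole attacker/opponent range, so that the comparison between the two choices is governed by the worst-case reward quantities $\delta_j r_i(k_j)$ — i.e. precisely by the objective whose argmax defines $k_i^*$. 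Concluding, $u_i(k_{j^*},k_{-i},a) \ge u_i(k_{j'},k_{-i},a)$ for all $a\in\{B,H,N\}$ and all $k_{-i}$ with $j' \ne j^*$, which is the definition of a dominant strategy; the organization therefore plays $k_i^*=k_{j^*}$ with probability $1$ (and the alternative with probability $0$).

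The main obstacle is this last comparison. One has to check that the common opponent factor $\phi_a(k_{-i})$ — which can be as small as its worst-case value and as large as $1$ — does not reshuffle the preference between $k_L$ and $k_H$: in the heavy-attack corner both choices realize their minimum reward fractions, in the no-attack corner both realize factor $1$, and in between they scale in the same direction, so the ordering should remain stable provided the two reward terms are sufficiently far apart relative to the cost/trust corrections. Establishing this uniformly over all three attacker actions and both opponent levels, rather than only at the single worst case that defines $\delta_j$, is exactly what the ``rewards high enough'' assumption is needed for, and is the step where those two inequalities must be invoked quantitatively rather than just qualitatively.
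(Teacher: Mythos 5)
Your plan follows the same idea as the paper's own (very short, qualitative) proof: the paper simply argues that once the minimum profit factors exceed the remaining terms, the organization "can expect that any other attacker's action will not lower its utility" and may therefore neglect the attacker and maximize $\delta_i r_i(k_i)-c_i(k_i)+\gamma k_i$. Your sandwich step is fine and is the natural way to make that precise: choosing level $k_j$ yields a realized payoff in $\bigl[\delta_j r_i(k_j)-c_i(k_j)+\gamma k_j,\; r_i(k_j)-c_i(k_j)+\gamma k_j\bigr]$. But dominance via this route requires the two intervals to be \emph{ordered}, i.e., the worst case of the argmax choice must beat the best case of the other choice: $\delta_{j^*} r_i(k_{j^*})-c_i(k_{j^*})+\gamma k_{j^*}\ \ge\ r_i(k_{j'})-c_i(k_{j'})+\gamma k_{j'}$. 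The stated hypotheses $\delta_H r_i(k_H)>\gamma k_H-c_i(k_H)$ and $\delta_L r_i(k_L)>\gamma k_L-c_i(k_L)$ are per-choice conditions comparing each level's worst-case reward to its \emph{own} additive correction; they say nothing about this cross-comparison. The sentence in your core step ("they guarantee that \dots the comparison between the two choices is governed by the worst-case reward quantities") is exactly the missing implication, and it does not follow: this is the step you yourself flag as "the main obstacle," and your proposal never closes it.

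Concretely, the implication can fail while both hypotheses hold. Take $\delta_L=0.5$, $\delta_H=0.8$, $r_i(k_L)=100$, $r_i(k_H)=70$ (consistent with $r(k_L)>r(k_H)$ and with $\delta_H>\delta_L$, which the model permits since the retained fraction increases in $k$), and additive corrections $\gamma k_L-c_i(k_L)=1$, $\gamma k_H-c_i(k_H)=2$. Both hypotheses hold ($50>1$, $56>2$), and the worst-case objective picks $k_H$ ($58>51$); yet against the attacker action $N$ the payoffs are $101$ for $k_L$ versus $72$ for $k_H$, so $k_H$ is not a dominant strategy. Thus your interval-ordering argument needs a genuinely stronger "gap" hypothesis (worst case of the winner at least the no-attack payoff of the loser), or an additional argument that the opponent/attacker factor $\phi_a(k_i)\phi_a(k_{-i})$ cannot reverse the preference — neither of which is supplied. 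To be fair, the paper's own proof is no more rigorous on this point (it asserts the attacker's effect can be neglected without the quantitative check), but as a standalone proof your proposal is incomplete precisely at the step where you promise the hypotheses "do the work."
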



\begin{proof}
The values of $\delta_H r_i(k_H) $ and $\delta_L r_i(k_L) $ represent the minimum fractions of the reward each organization can achieve, under the attacker's maximum probability of success. 
When the values of $r_i(k_i)$ are high enough to make these minimum profit factors higher than the rest of the factors of the utility, each organization can expect that any other attacker's action will not lower its utility. Thus, the organization can determine its dominant strategy while neglecting the attacker's effect.
\end{proof}

Note in Proposition \ref{prop2}, a high reward can eliminate the attacker's effect, however, it cannot be used solely to determine the organization's action as this is affected by the other factors in the organization's utility.
 
To this end, when no player has a dominant strategy, the players will randomize over their strategies using the probability distributions of the mixed-strategy Nash equilibrium. These mixed strategies can be calculated when the players are indifferent between choosing their actions, i.e., the expected utility of choosing each action will be the same. For instance, the organizations can choose their $p$ such that the attacker's expected utility from choosing the action $B$ will equal to that of choosing the action $H$. The attacker's expected utility from choosing the action $B$ can be given by:
 \begin{align}\label{eq:expUtil1}
     \EX(u_a(k_1,k_2,B)) &= p \cdot p \cdot u_a(k_L,k_L,B) + p \cdot (1-p) \cdot \nonumber \\
     & u_a(k_L,k_H,B) + (1-p) \cdot p \cdot u_a(k_H,k_L,B)    \nonumber \\ 
     &+ (1-p) \cdot (1-p) \cdot u_a(k_H,k_H,B).
 \end{align}
 
 Similarly, the expected utility of choosing action $H$ is:
 \begin{align}\label{eq:expUtil2}
     \EX(u_a(k_1,k_2,H)&) =  p \cdot p \cdot u_a(k_L,k_L,H) + p \cdot (1-p) \cdot \nonumber \\ 
     & u_a(k_L,k_H,H) + (1-p) \cdot p \cdot u_a(k_H,k_L,H)  \nonumber \\ 
     &+ (1-p) \cdot (1-p) \cdot u_a(k_H,k_H,H).
 \end{align}

For the attacker to be indifferent between its actions, the utility in \eqref{eq:expUtil1} must equal the utility in \eqref{eq:expUtil2}. Solving both equations together, the organizations' probabilities of choosing $k_L$, i.e., $p$ can then be given as the solution of the equation:

\small
\begin{align}\label{eq:p}
\Big(&\frac{2p^2(B)-2p^2(H)}{(\alpha k_L + 1)(\alpha k_H + 1)}  - \frac{p^2(B) - p^2(H)}{(\alpha k_H + 1)^2} - \frac{p^2(B) - p^2(H)}{(\alpha k_L + 1)^2} \Big) R_a \nonumber  \\
 &  p^2 + 2 R_a p \Big( \frac{p(B)-p(H)}{(\alpha k_L + 1)}  - \frac{p(B)-p(H)}{(\alpha k_H + 1)}  + \frac{p^2(B)-p^2(H)}{(\alpha k_H + 1)^2} - \nonumber \\
 &  \frac{p^2(B)-p^2(H)}{(\alpha k_L + 1)(\alpha k_H +1)} \Big)   + \Big(\frac{2p(B)-2p(H)}{(\alpha k_H + 1)} -\frac{p^2(B)-p^2(H)}{(\alpha k_H + 1)^2} \Big) \nonumber \\ 
 &  R_a  - 4c_a(B) + 4c_a(H) =0.
\end{align}
\normalsize

After calculating the probability $p$, the attacker's probability $q$ can be calculated in a similar way by considering the expected utility of one of the organizations. Note that, due to the symmetry between the organizations, considering the utilities of both organizations will be redundant. To this end, the first organization's expected utility from choosing $k_L$ can be given by:
\begin{align}\label{eq:expUtildef1}
     \EX(u_1(k_L,k_2,a)) &= p \cdot q \cdot u_1(k_L,k_L,B) + p \cdot (1-q) \cdot   \nonumber \\
    & u_1(k_L,k_L,H) + (1-p) \cdot q \cdot u_1(k_L,k_H,B)    \nonumber \\
    &+(1-p) \cdot (1-q) \cdot u_1(k_L,k_H,H).
 \end{align}

 Similarly, the first organization's expected utility from choosing $k_H$ can be given by:
 \begin{align}\label{eq:expUtildef2}
     \EX(u_1(k_H,k_2,a)) &= p \cdot q \cdot u_1(k_H,k_L,B) + p \cdot (1-q) \cdot \nonumber \\ 
     & u_1(k_H,k_L,H) + (1-p) \cdot q \cdot u_1(k_H,k_H,B)  \nonumber \\
     & + (1-p) \cdot (1-q) \cdot u_1(k_H,k_H,H).
 \end{align}
 
For the organization to be indifferent between its actions, the utility  in  \eqref{eq:expUtildef1} must equal the utility in \eqref{eq:expUtildef2}. Solving  both equations together, the attacker’s probabilities of choosing $B$, i.e., $q$ can then be given as the solution of the equation:

\footnotesize
\begin{align}\label{eq:q}
  q &= \bigg(u_{1}(k_{L},k_{H},H)+u_{1}(k_{H},k_{H},H) - p \Big( u_{1}(k_{L},k_{L},H) + u_{1}(k_{H},\nonumber \\
 & k_{L},H)  + u_{1}(k_{L},k_{H},H) + u_{1}(k_{H},k_{H},H)\Big)\bigg) \bigg/ \bigg(u_{1}(k_{L},k_{H},H)+  \nonumber \\
 & u_{1}(k_{H},k_{H},H)-u_{1}(k_{H},k_{H},B)- u_{1}(k_{H},k_{L},B) + p \Big(u_{1}(k_{H},k_{H}, \nonumber \\
 &B) +  u_{1}(k_{H},k_{H},B)+  u_{1}(k_{L},k_{L},B)+u_{1}(k_{H},k_{L},B)-u_{1}(k_{H},k_{H}, \nonumber \\
 &H)-u_{1}(k_{L},k_{H},H)-u_{1}(k_{L},k_{L},H)-u_{1}(k_{H},k_{L},H) \Big) \bigg).
 \end{align}
 \normalsize
Given the value of $p$ from \eqref{eq:p}, the value of $q$ can be uniquely computed from \eqref{eq:q}. The Nash equilibrium mixed strategies can then be given as $(p, 1-p)$ for the organizations and $(q, 1-q)$ for the attacker. 

Note that, the solution of equations \eqref{eq:p} and \eqref{eq:q} gives one Nash equilibrium of the game $\mathcal{G}$. However, the game as involving multiple players might have other Nash equilibrium solutions \cite{lee2003solving}. These equilibrium points will have different values for $p$ and $q$ but the same outcomes for the players. Here, we only consider the solution in which all the organizations adopt the same the strategy.

\section{Repeated Sharing} \label{chapter_five}

In this section, we consider the case in which the organizations share their datasets with the data collector more than once over a period of time. At each time step, the organizations share different datasets with their choice of anonymization levels. The problems formulated in Section \ref{chapter_three} are extended by adding the notion of time. In particular, we first consider the dynamic contracts offered by the data collector, then, we solve the repeated game between the organizations and an attacker.

\subsection{Dynamic Contracts}

The problem formulated and solved in the static case in \eqref{eq:optimal} can be used by the data collector, at each time step, to determine the optimal reward value that will be given to each organization. The utility function of the data collector can be extended from \eqref{cybex} to be:
\begin{equation}\label{cybexDyn}
    U_{d,T} = \sum_{t=1}^{T} \sum_{i=1}^{N} \theta_{i,t} \left( v_{i,t} -  r_{i,t}(k_{i,t}) \right),
\end{equation}
where each parameter is the same as \eqref{cybex} but considered at a specific time step $t$ and $T$ is the total number of time steps. Solving the static problem in \eqref{eq:optimal} at every time step will ensure that the data collector maximizes its reward over all time steps.

However, considering the dynamic nature of the problem in \eqref{cybexDyn}, we propose a new technique for the data collector to improve its outcome while providing more incentive for the organizations to participate. Recall that in the static case, according to Theorem \ref{Theorem1}, one organization can get a reward that equals its cost if it uses the highest anonymization level. This will make the net outcome of this organization equals zero, which can be seen as a limiting factor that might hinder the organization from participation. In the repeated case, we propose that the data collector can sign a long-term contracts with the organizations and offer a minimum net outcome for each organizations at each time step.

To this end, the individual rationality constraint in \eqref{ineq:1} will be modified to maintain a minimum outcome value, $m_i$ at each time step such that $m_i > 0$:
\begin{equation}\label{ineq:1_mod}
        \theta_i r_i(k_i) - c_i(k_i) + T_i(k_i) \ge m_i,~~ i \in \mathcal{N}.
    \end{equation}

The range of the values of $m_i$ can be calculated using the following theorem.

\begin{theorem}\label{Theorem2}
To incentivize their participation, the data collector can offer each organization $i$ a value $m_i$, at each time step, as a minimum net outcome such that $\sum_{t=1}^T m_{i,t}$ satisfies:
\begin{equation*}
    0 < \sum_{t=1}^{T} m_{i,t} < \sum_{t=1}^{T}\theta_{i,t} \left( v_{i,t} -  r_{i,t}(k_{i,t}) \right) , t \in \mathcal{T},
\end{equation*}
where $\mathcal{T}$ is the set of the time steps that the organization was expected to have a zero utility in the static case.
\end{theorem}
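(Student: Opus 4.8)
The plan is to track how introducing a strictly positive floor $m_{i,t}$ on each organization's per-period net outcome propagates, through the (binding) modified individual-rationality constraint \eqref{ineq:1_mod}, into the data collector's objective \eqref{cybexDyn}, and then to read off the admissible range of $\sum_t m_{i,t}$ from the two requirements that the floor be a genuine incentive and that the data collector still earns a positive payoff.

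First I would isolate the time steps $t \in \mathcal{T}$ at which, in the static solution of Theorem \ref{Theorem1}, organization $i$ (the one pinned at the highest anonymization level at that step) receives the reward making its utility exactly zero, i.e. $\theta_{i,t} r_{i,t}(k_{i,t}) - c_{i,t}(k_{i,t}) + T_{i,t}(k_{i,t}) = 0$. Replacing the static IR constraint by \eqref{ineq:1_mod} with $m_{i,t} > 0$ forces the data collector to raise that reward: the left-hand side of \eqref{ineq:1_mod} is increasing in $r_{i,t}$, and by the same minimality argument used in the proof of Theorem \ref{Theorem1} the constraint binds at the optimum, so the new reward $\tilde r_{i,t}$ satisfies $\theta_{i,t}\tilde r_{i,t}(k_{i,t}) - c_{i,t}(k_{i,t}) + T_{i,t}(k_{i,t}) = m_{i,t}$, hence $\theta_{i,t}\tilde r_{i,t}(k_{i,t}) = \theta_{i,t} r_{i,t}(k_{i,t}) + m_{i,t}$. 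For $t \notin \mathcal{T}$ the static reward already yields a strictly positive outcome, so a small enough floor can be honored there with no change.

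Substituting into \eqref{cybexDyn}, the term contributed by organization $i$ at a step $t \in \mathcal{T}$ changes from $\theta_{i,t}(v_{i,t} - r_{i,t}(k_{i,t}))$ to $\theta_{i,t}(v_{i,t} - \tilde r_{i,t}(k_{i,t})) = \theta_{i,t}(v_{i,t} - r_{i,t}(k_{i,t})) - m_{i,t}$, while the other terms are unchanged. Summing over all time steps, the data collector's payoff attributable to organization $i$ becomes $\sum_{t=1}^{T}\theta_{i,t}(v_{i,t} - r_{i,t}(k_{i,t})) - \sum_{t=1}^{T} m_{i,t}$ (with $m_{i,t}=0$ for $t\notin\mathcal{T}$). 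Requiring that the data collector still profits from organization $i$ gives the upper bound $\sum_t m_{i,t} < \sum_t \theta_{i,t}(v_{i,t} - r_{i,t}(k_{i,t}))$, and requiring that the floor actually be an incentive (some period with $m_{i,t}>0$) gives the lower bound $\sum_t m_{i,t} > 0$; together these produce the claimed interval.

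The main obstacle I expect is not the algebra but making the bookkeeping precise: pinning down that the positivity requirement is read on the data collector's realized payoff after the reward is bumped (which is why the static $r_{i,t}$ appears on the right-hand side of the bound), and checking that raising organization $i$'s reward by $m_{i,t}/\theta_{i,t}$ does not break the feasibility of the remaining contracts — i.e. the reduced IC ordering of Lemma \ref{lemm_cont1} and the other IR constraints of Theorem \ref{Theorem1} must still hold. I would dispatch this by noting that the ordering of rewards in the static solution is governed by the types $\theta$ and that the bumped organization is the one previously at zero utility, so a sufficiently small $m_{i,t}$ leaves all orderings intact; the displayed two-sided bound then delimits exactly the slack available to the data collector.
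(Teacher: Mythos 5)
Your proposal is sound and reaches the stated interval, but it justifies the upper bound by a different economic argument than the paper does. You make the binding of \eqref{ineq:1_mod} explicit, compute that honoring a floor $m_{i,t}$ raises the reward to $\tilde r_{i,t}$ with $\theta_{i,t}\tilde r_{i,t}(k_{i,t}) = \theta_{i,t} r_{i,t}(k_{i,t}) + m_{i,t}$, so the collector's per-step term in \eqref{cybexDyn} drops by exactly $m_{i,t}$, and then you impose that the collector's residual payoff from organization $i$ stay positive. The paper instead argues by a counterfactual: at the steps $t \in \mathcal{T}$ where the static solution of Theorem \ref{Theorem1} leaves the organization at zero utility, the collector should expect non-participation and hence would forfeit the entire gain $u_{m,i} = \sum_{t \in \mathcal{T}} \theta_{i,t}\left(v_{i,t} - r_{i,t}(k_{i,t})\right)$; ceding any total floor strictly below $u_{m,i}$ (and above zero, so that it is a genuine incentive) therefore strictly improves the collector's outcome relative to losing $u_{m,i}$ outright. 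The two routes coincide only if your positivity requirement is read on the incremental payoff secured at the steps in $\mathcal{T}$: note that in the theorem the right-hand sum is restricted to $t \in \mathcal{T}$, so demanding that the collector "still profits from organization $i$" over \emph{all} $T$ steps would yield a weaker (larger) admissible range whenever the collector earns positive amounts outside $\mathcal{T}$. Your explicit reward-bump bookkeeping and the check that a small floor preserves the ordering of Lemma \ref{lemm_cont1} and the remaining IR constraints are refinements the paper only gestures at, and they strengthen the argument; just anchor the positivity condition to the $\mathcal{T}$-restricted gain to match the stated bound exactly.
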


\begin{proof}
Considering the problem in \eqref{eq:optimal}, the data collector will either gain $\theta_i \left( v_i -  r_i(k_i) \right)$ when an organization shares its dataset, or zero if it does not share. The idea here is to allow the data collector to offer part of its expected outcome to the organizations, to ensure their participation. In return, the data collector will increase its total reward by the difference between what it pays $m_{i,t}$ and the newly secured benefit $\theta_{i,t} \left( v_{i,t} -  r_{i,t}(k_{i,t}) \right)$ for this specific time step.

Let $\mathcal{T}$ be the set of time steps in which the individual rationality constraint will be binding in \eqref{eq:optimal}. The data collector can assume that the organizations will not be willing to participate at these time steps. The expected gain that the data collector will miss at these time steps can be expressed as follows:
\begin{equation*}
    u_{m,i} = \sum_{t=1}^{T}\theta_{i,t} \left( v_{i,t} -  r_{i,t}(k_{i,t}) \right),~ t \in \mathcal{T},
\end{equation*}
where $u_{m,i}$ is the gain that the data collector can miss if the organization $i$ does not participate.

It is clear that the data collector can improve its outcome if it can obtain part of $u_m$. This can be secured by signing a long-term contract with the organizations to offer a minimum reward that is less than $u_{m,i}$.
\end{proof}

\subsection{Dynamic Game Solution}

In this section, we consider the case in which the same game is repeated over time. This is different from the static case considered in Section~\ref{chapter_four} in that the players' utilities change over time. In particular, the organizations' utility will be:
\begin{align}\label{eq:dynamicutility}
    &u_i(k_i,k_{-i},a,t_i) = \frac{R_i(t_i)}{k_i(t_i)} \cdot (1- \frac{p(a)}{\alpha k_i(t_i)+1}) \cdot \nonumber \\
    &(1- \frac{p(a)}{\alpha k_{-i}(t_i)+1}) - c_i(t_i)(k_i(t_i))+\gamma(t_i) \cdot k_i(t_i),
\end{align}
where every parameter will be a function in the time step $t_i$. However, from a practical point of view, the probability of a successful attack $p(a)$ is expected to be constant over time, thus, it is not defined as a function in time. Note that, the functions $R_i(t_i)$ and $c_i(t_i)$ will be evaluated each time step. Thus, their values will change based on the datasets shared at a given time step. On the other hand, the coefficient of trust, $\gamma(t_i)$ will need to be a function that changes over time.

Here, we propose to define that the coefficient of trust, $\gamma(t_i)$, based on the organization's observation of the security level of the common platform.  When an organization shares its dataset with the common platform, it is the responsibility of the platform to maintain and protect the data. Thus, an organization can build its trust in the common platform based on the rate of successful and unsuccessful attacks in the previous time steps. Thus, the coefficient of trust, $\gamma(t_i)$ can be given as:
\begin{equation}\label{eq:trustdynamic}
    \gamma(t_i) = \frac{\gamma(t_{i-1}) + \mathbbold{1}_{\textrm{un}}}{2},
\end{equation}
where $\mathbbold{1}$ is the indicator function which equals $1$ in case of unsuccessful attack and equals $0$ if the attack was successful. 

Following \eqref{eq:trustdynamic}, the coefficient of trust will equal half of the value in the previous step if an attack was successful. On the other hand, if an attack was blocked by the common platform, the coefficient of trust will be the average of the previous value and $1$. This insures that the value of the coefficient remains between $0$ and $1$, when the initial value is less than $1$. Note that, the design of the trust function in \eqref{eq:trustdynamic} will punish the successful attack more than it rewards blocking the attacks. Note that, this design is different from other trust coefficients proposed in literature, e.g., \cite{khan2018using} which treat positive and negative updates equally. Our proposed coefficient ensures that the data collector does its best to protect the data and prevent any undesired access to it.

\begin{figure}[tbp]
    \centering
    \includegraphics[width=8.9cm]{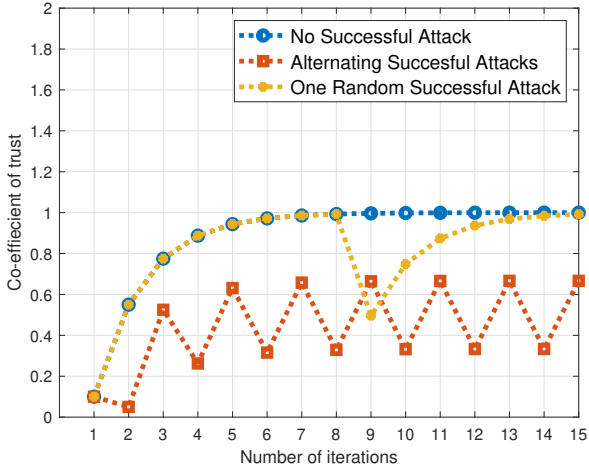}
    \caption{Coefficient of trust updates over the time}
    \label{fig:gamma_vs_time}
    \vspace{-0.4cm}
\end{figure}

In Fig.~\ref{fig:gamma_vs_time}, we study the evolution of the coefficient of trust over time. We model three scenarios, the first when there is no successful attack. In this case, the coefficient of trust will will increase from its initial value, $0.1$ until it reaches almost $1$ after $6$ times steps. This means after six times of sharing, the organization will be able to fully trust the data collector. The second scenario, is when there is a successful attack after every alternate iteration, data share. Since our coefficient punishes the successful attacks more than it rewards the positive ones, we notice that the value of the coefficient of trust will oscillate between $0.35$ and $0.65$. Finally, we consider the scenario when there is only one random successful attack, after the value has reached its maximum. We notice that this successful attack will cause the value of the coefficient of trust to drop to about $0.5$ and it will need $6$ iterations to return to $1$ again. The effect of the coefficient of trust evolution over time is studied in the simulations section.

\section{Simulation Results and Analysis}\label{chapter_six}

For our simulations, we choose two values of $k$, the lower $k_L=3$ and the higher $k_H=7$. These values are kept the same for all the experiments in this section. The values of the other parameters will be as follows, unless otherwise stated. The measure of information security, $\alpha=0.9$ and the coefficient of trust, $\gamma=1.0$. The success probabilities of the different attack types are assumed to be $P(H_d)=0.2$, $P(H_s)=0.6$, and $P(B)=0.5$, which follows the discussion in Section \ref{Sec:first_Tier} about the relation between the different probabilities. Finally, we assume similar dataset structures between the different organizations, i.e., the organizations have the same value of $V$ in \eqref{eq:cost1} so that the cost function is affected only by the choice of $k$. We note that the reward given to each organization, by the data collector, depends on the value of its shared dataset as in \eqref{cybex}. However, as these values can fall within a wide range that will affect the game equilibrium, we consider the case in which two organizations are sharing datasets with the same values. Thus, the effect of different dataset values is eliminated on the equilibrium analysis.

\begin{table*}[tbp]
\centering \caption{Attacker's equilibrium strategies}
\begin{tabular}{|c|c|c|c|c|c|c|c|c|c|c|}
\hline
$R_a$         & \textbf{10} & \textbf{20} & \textbf{30} & \textbf{40} & \textbf{50} & \textbf{60} & \textbf{70} & \textbf{80} & \textbf{90} & \textbf{100} \\ \hline
$B$ & 0           & 0           & 0.33        & 0.89        & 0.9649      & 0.8719      & 0.8153      & 0.7768      & 0.7489      & 0.7276       \\ \hline
$H$ & 0           & 0           & 0.67        & 0.11        & 0.0351      & 0.1281      & 0.1847      & 0.2232      & 0.2511      & 0.2724       \\ \hline
$N$ & 1           & 1           & 0           & 0           & 0           & 0           & 0           & 0           & 0           & 0            \\ \hline
\end{tabular} \label{Tab1}
\vspace{-0.3cm}
\end{table*}

\begin{table*}[tbp]
\centering \caption{Organization's equilibrium strategies}
\begin{tabular}{|c|c|c|c|c|c|c|c|c|c|c|}
\hline
$r(k_H)$ & \textbf{8} & \textbf{16} & \textbf{24} & \textbf{32} & \textbf{40} & \textbf{48} & \textbf{56} & \textbf{64} & \textbf{72} & \textbf{80} \\ \hline
$r(k_L)$ & \textbf{9} & \textbf{18} & \textbf{27} & \textbf{36} & \textbf{45} & \textbf{54} & \textbf{63} & \textbf{72} & \textbf{81} & \textbf{90} \\ \hline
$k_L$ & 0          & 1           & 0.362       & 0.2613      & 0.6853      & 0.7087      & 0.7241      & 0.7352      & 0.7435      & 0.7499      \\ \hline
$k_H$ & 1          & 0           & 0.638       & 0.7387      & 0.3147      & 0.2913      & 0.2759      & 0.2648      & 0.2565      & 0.2501      \\ \hline
\end{tabular}\label{Tab2}
\vspace{-0.3cm}
\end{table*}

First, we solve the formulated game $\mathcal{G}$ using the analysis in section \ref{sec:static_game}. We consider the value of organizations' datsets to vary from $10$ to $100$. These values represent the monetary rewards the data collector will give to the organizations as a reward for sharing the data. Here, we use abstract values. However, in a real-life scenario, the data collector needs to estimate these values to be proportional to the cost. Throughout the following results, we assume that the attacker can achieve the full value of the information, i.e., $R_a = v $, while the organizations will be given a reward of $r(k_L) = 0.9v$ and $r(k_H) = 0.8v$ such that $r(k_L) > r(k_H)$ according to Lemma \ref{lemm_cont1}. Note that, under a limited number of simulation parameters, there was no feasible solution for equations \eqref{eq:p} and \eqref{eq:q}. In such cases, we used a numerical solver for the game we have chosen the equilibrium point in which the organizations have the same strategy.

Using the previous parameters, the equilibrium strategies for both the attacker and the organizations are shown in Tables \ref{Tab1} and \ref{Tab2}, respectively. We note that, when the values of $R_a$ are less than $25$, the attacker cannot achieve a positive utility. Hence, it will choose not to attack. This situation corresponds to the case of Proposition {\ref{prop1}} and the organization's utility is calculated using {\eqref{eq:noAttack}}. In this case, the organization will have a pure strategy of choosing $k_H$ when $R_a=10$ and a pure strategy of choosing $k_L$ when $R_a=20$. This change occurs as $k_L$ achieves a higher reward for the organization starting from $R_a=20$, i.e., if there were no attacks for higher rewards, the organization will choose $k_L$. For the values of $R_a$ between $30$ and $100$, both the attacker and the organization will have mixed strategies, i.e., choosing their actions with certain probabilities. In the case where $R_a$ is $30$, the attacker has a higher probability of choosing homogeneity attack. Correspondingly, the organization will prioritize using $k_H$. However, for large values of $R_a$, the attacker will benefit if it performed the background knowledge attack, in this case the organization can choose between the two values of $k$ with $k_L$ being superior, i.e., it has a high probability to be chosen.

\begin{figure}[tbp]
    \centering
    \includegraphics[width=8cm]{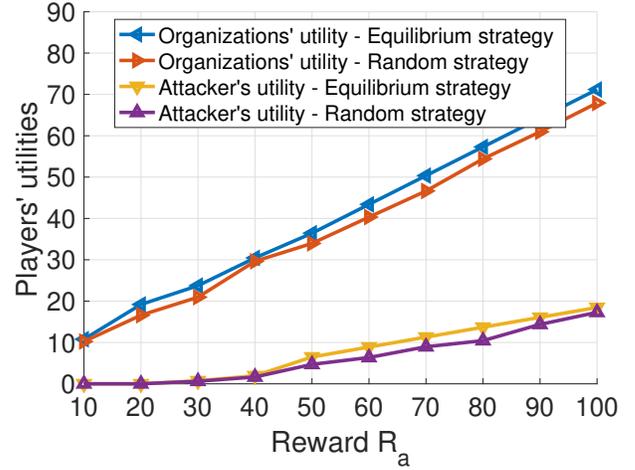}
    \caption{The organization's and the attacker's utilities at equilibrium at different reward $R$ values.}
    \label{fig:R_vs_U1_U2}
    \vspace{-.3cm}
\end{figure}

Next, we study the utilities associated with the previous equilibrium strategies in Fig. \ref{fig:R_vs_U1_U2}. The expected utilities, i.e., the summation of every outcome multiplied by the equilibrium probabilities of choosing these outcomes from Tables \ref{Tab1} and \ref{Tab2}, of the players are shown in Fig. \ref{fig:R_vs_U1_U2}. These utilities represent the outcomes of the game which each player will achieve. In Fig. \ref{fig:R_vs_U1_U2}, we can see that when the attacker chooses not to attack, its utility will equal zero. Meanwhile, the organization will be able to achieve a utility slightly higher than the reward value. On the other hand, for the reward values $R_a \ge 30$, the utility of the organization will be less than the reward as the attack reduces the organization's utility according to \eqref{eq:defenderPayOff}. However, for all the values of $R_a$, the players' utilities witness a monotonic increase in the value of $R_a$. Then, we compare these equilibrium utilities to the case where one player chooses random probabilities while the other player sticks to its equilibrium strategy. From Fig. \ref{fig:R_vs_U1_U2}, we can see that when a player deviates from the equilibrium strategy, to a random strategy, it cannot achieve a higher utility as its utility will be lower or equal to the equilibrium utility. This corroborates the importance of finding Nash equilibrium strategies as they represent the best that each player can achieve given their opponent's actions.

In Fig. \ref{fig:B_vs_p_q}, we study the effect of the success probability of the background knowledge attack, i.e., $p(B)$ on the equilibrium strategies of the players, under two different scenarios of lower reward ($R_a = 50$) and a higher reward ($R_a = 100$). All the other parameters are the same as in Fig. \ref{fig:R_vs_U1_U2}. Note that, the values of $p(B)$ are chosen to start at $0.3$ to satisfy the assumption $p(B)>p(H_d)$. For each value of $p(B)$ and $R_a$, the game $\mathcal{G}$ is solved and the equilibrium strategies are shown in Fig. \ref{fig:B_vs_p_q} in a similar way to the values in Tables \ref{Tab1} and \ref{Tab2}. From Fig. \ref{fig:B_vs_p_q}, we can see that when $p(B)$ is slightly higher than $p(H_d)$ i.e., $p(B)=0.3$ the attacker will have a zero value of $q$ which corresponds to exclusively choosing to perform homogeneity attack, for both $R_a = 50$ and $R_a = 100$. At the same point, the organization will choose $k_L$ with slightly higher probability for both $R_a = 50$ and $R_a = 100$. However, as the value of $p(B)$ increases, under lower reward, the organization will prefer to use $k_H$ as it provides a higher utility. Correspondingly, the attacker's probability for performing background knowledge attacks increases till it becomes close to pure strategy. On the other hand, under high reward ($R_a=100$), the attacker's probability of performing background knowledge attack will become near pure-strategy due to the increased probability of success. However, the organization's probability of choosing $k_L$ will be very high, i.e, near pure-strategy. In Fig. \ref{fig:B_vs_p_q}, the effect of the reward that each organization is given is clear on its equilibrium strategies, as for low rewards, the organization will be choosing $k_H$ while for higher rewards the organization will be choosing $k_L$.



\begin{figure}[tbp]
    \centering
    \includegraphics[width=8.5cm]{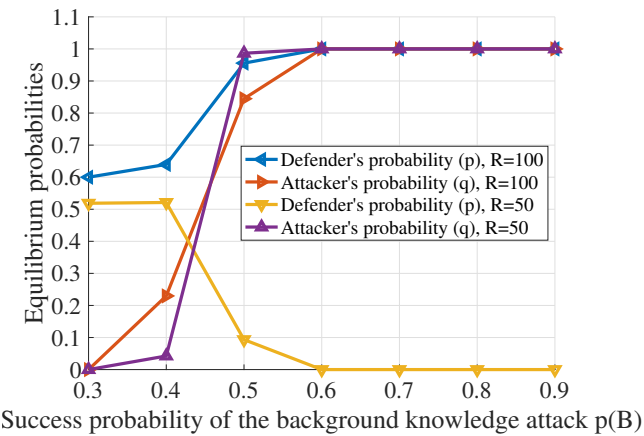}
    \caption{The organization's and the attacker's equilibrium probabilities under low and high rewards at different success probabilities for background knowledge attack $p(B)$ values.}
    \label{fig:B_vs_p_q}
    \vspace{-.3cm}
\end{figure}

In Fig. \ref{fig:H1_vs_p_q}, we study the effect of the success probability of the homogeneity attack, at similar values of $k$, i.e, $p(H_s)$ on the equilibrium strategies of the players, under the same varying reward scenarios as the Fig \ref{fig:B_vs_p_q}. Similar to Fig. \ref{fig:B_vs_p_q}, the values of $p(H_s)$ are starting at $0.3$ so that $p(H_s)>p(H_d)$. The rest of the simulation parameters are the same as Fig. \ref{fig:R_vs_U1_U2}. From Fig. \ref{fig:H1_vs_p_q}, we can see that when $p(H_s)$ is less than $p_(B)$ i.e., $p(H_s) <0.5$, under low reward, the attacker will have a near pure-strategy probability of choosing the background knowledge attack. This probability will decrease as $p(H_s)$ is equal to $p(B)$ or higher. In this case, the attacker will choose the homogeneity attack with higher probability especially with the increase in its success probability. Similarly, when $p(H_s) <0.5$, the organization will choose $k_H$ with higher probability. However, this probability will be decreasing as $p(H_s)$ increases. Under the high reward scenario, when $p(H_s) <0.5$, the attacker will be performing background knowledge attack in a near exclusive manner, while as $p(H_s)$ is equal to $p(B)$ or higher, this probability decreases. Similarly, when $p(H_s) <0.5$, the organization will have a higher probability on $k_L$. As $p(H_s)$ increases, the organizations will place more emphasis on playing $k_H$. We note that, the low values of $p(H_S) \le 0.5$ are similar to the case of Proposition \ref{prop2}, where the organizations are able to play a pure strategy that maximize their utility. Consequently, the attacker also plays a pure strategy of choosing the background knowledge attack. However, as the values of $p(H_s)$ increase, the organizations' portion of the utility decrease and the game witnesses a mixed strategy solution.


\begin{figure}[tbp]
    \centering
    \includegraphics[width=8cm]{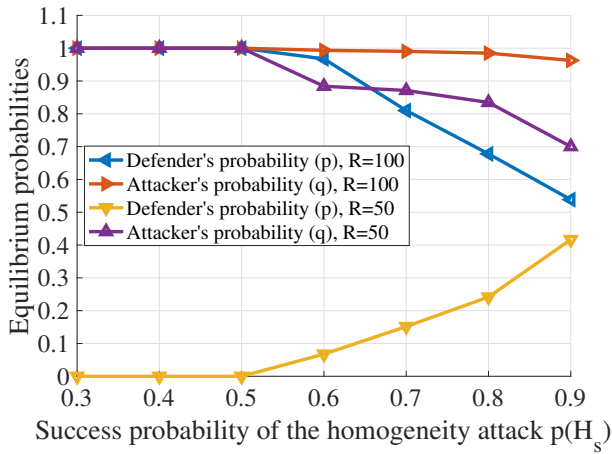}
    \caption{The organization's and the attacker's equilibrium probabilities under low and high rewards at different success probabilities for homogeneity attack $p(H_s)$ values.}
    \label{fig:H1_vs_p_q}
    \vspace{-.3cm}
\end{figure}

\begin{figure}[tbp]
    \centering
    \includegraphics[width=8cm]{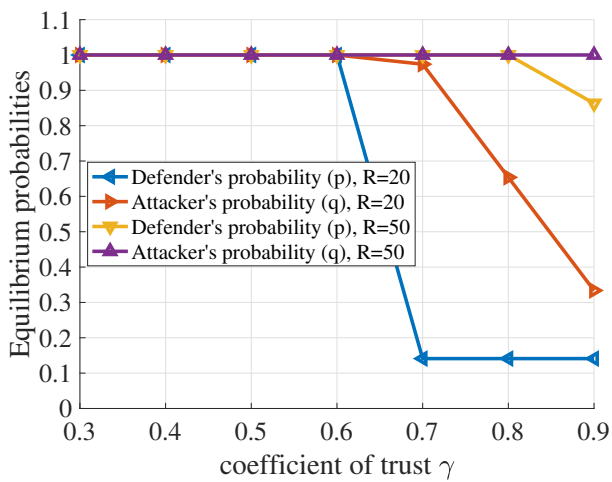}
    \caption{The organization's and the attacker's equilibrium probabilities under low and high rewards at different coefficient of trust $\gamma$ values.}
    \label{fig:gamma_vs_p_q}
    \vspace{-.3cm}
\end{figure}

Next, we study the dynamic case of the game in Fig. \ref{fig:gamma_vs_p_q}. We notice from \eqref{eq:trustdynamic} that the coefficient of trust changes over time based on its previous values that depend on the breaches in the previous steps. However, these successful breaches cannot be predicted beforehand and they can follow different behavior as in Fig. \ref{fig:gamma_vs_time}. Therefore, in Fig. \ref{fig:gamma_vs_p_q}, we show the equilibrium strategies against different $\gamma$ values. In a dynamic scenario, the equilibrium at each time step can be identified through the corresponding $\gamma$ in Fig. \ref{fig:gamma_vs_p_q}. Similar to the previous figures, we study the effect of $\gamma$ under low and high values of $R_a$, while the rest of the parameters is similar to Fig. \ref{fig:R_vs_U1_U2}. In Fig. \ref{fig:gamma_vs_p_q}, We can see that, for higher rewards $R_a = 50$, the trust factor $\gamma$ does not have a visible effect on the equilibrium strategies, i.e, both players' will choose a pure strategy (except for the organization at $\gamma = 0.9$). However, for lower rewards $R_a = 20$, the equilibrium strategies change. For instance, the equilibrium probabilities of both the attacker and the organization deviate significantly as $\gamma > 0.6$. The difference between the high and the low reward values is that in the lower reward case, the trust factor represents a significant portion of the utility, and, thus, it affects the equilibrium strategy. However, for high values of $R_a \ge 50$, the coefficient of trust can be seen negligible compared to the reward value, and, thus, it does not affect the equilibrium.




Finally, in Fig. \ref{fig:contract1}, we study the contract utilities of different organizations when they accept contracts from the data collector that match their anonymization levels and that do not match. For this case, we consider three organizations adopting three anonymization levels, $k_L, k_M, k_H$. We use a generalized solution of Theorem \ref{Theorem1} to assign the contract utilities of the organizations, while maintaining the incentive compatibility constraints between the different organization's types.
In Fig. \ref{fig:contract1}, we can clearly see that it is better for each organization to choose the contract designed for its type as it will benefit with the most utility if they stick to their incentive compatible contract, versus if they choose to accept a reward designed for another anonymization level, whether higher or lower, as it makes their net outcome (utility) lower.

\begin{figure}[tbp]
    \centering
    \includegraphics[width=8cm]{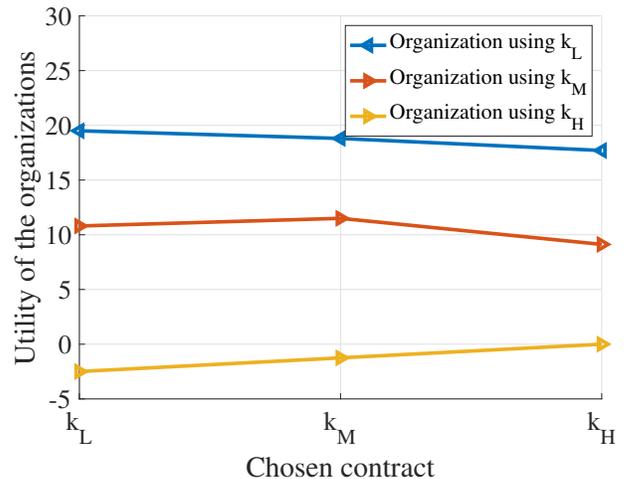}
    \caption{The utility of each organization while accepting the contract designed for its type or other contracts.}
    \label{fig:contract1}
    \vspace{-.3cm}
\end{figure}

\section{Conclusions} \label{chapter_seven}

In this paper, we have proposed a two-tier model to study the interactions between different organizations who share anonymized datasets with a data collector, and an attacker who performs de-anonymization attacks. In the first tier, a game-theoretic model has been used to determine the optimal anonymization level for k-anonymization, under possible attacks. Two common attack types have been considered which are homogeneity attack and background knowledge attack. In the second tier, a contract-theoretic model has been formulated to determine the data collector's optimal rewards that are given to the organizations for their data. In both levels of the problem, the mathematical solutions have been derived and closed-form solutions have been derived. The problem has also been studied in both a static scenario and a dynamic scenario to highlight the effect of repeated sharing on the players' behavior. Through simulation results, we have shown that the sharing organizations can optimize their anonymization level selection while the data collector can economically benefit by incentivizing more organizations to share their data.

\bibliography{mybibfile}
\bibliographystyle{ieeetr}
\end{document}